%% file: root.tex
\documentclass[letterpaper, 10 pt, conference]{ieeeconf}
\IEEEoverridecommandlockouts
\usepackage[utf8]{inputenc} 
\usepackage{dsfont}

\usepackage{amsfonts}
\usepackage{amsmath}
\usepackage{amssymb}
\usepackage{hyperref}       
\usepackage{url}            
\usepackage{booktabs}       

\usepackage{nicefrac}       
\usepackage{microtype}      
\usepackage{algorithmic}
\usepackage{mathrsfs}
\usepackage{graphicx}
\usepackage{textcomp}
\usepackage{xcolor}
\usepackage{subcaption}

\newtheorem{lemma}{Lemma}

\newtheorem{assumption*}{Assumption}
\newtheorem{stdassumption*}{Standing Assumption}

\newtheorem{proposition}{Proposition}
\newtheorem{definition}{Definition}
\newtheorem{definition*}{Definition}

\usepackage{mathtools}
\graphicspath{ {./images/} }

\usepackage[ruled,vlined]{algorithm2e}

\usepackage{cleveref}
\def\BibTeX{{\rm B\kern-.05em{\sc i\kern-.025em b}\kern-.08em
    T\kern-.1667em\lower.7ex\hbox{E}\kern-.125emX}}
    
\begin{document}

\title{\bf \LARGE 
  Actionable Recourse in Competitive Environments:\\ A Dynamic Game of Endogenous Selection
}

\author{Ya-Ting Yang and Quanyan Zhu
\thanks{Authors are with the Department of Electrical and Computer Engineering, New York University, NY, 11201, USA {\tt \{yy4348, qz494\}@nyu.edu}}
}

\maketitle

\input{abstract}

\input{introduction}

\input{relatedwork}

\input{risk}

\input{actionable}

\input{dynamic}

\input{experiment}

\input{conclusion}

\bibliographystyle{abbrv}
\bibliography{ref}

\end{document}

%% file: abstract.tex
\begin{abstract}
Actionable recourse studies whether individuals can modify feasible features to overturn unfavorable outcomes produced by AI-assisted decision-support systems. However, many such systems operate in competitive settings, such as admission or hiring, where only a fraction of candidates can succeed. A fundamental question arises: what happens when actionable recourse is available to everyone in a competitive environment?
This study proposes a framework that models recourse as a strategic interaction among candidates under a risk-based selection rule. Rejected individuals exert effort to improve actionable features along directions implied by the decision rule, while the success benchmark evolves endogenously as many candidates adjust simultaneously. This creates endogenous selection, in which both the decision rule and the selection threshold are determined by the population's current feature state. This interaction generates a closed-loop dynamical system linking candidate selection and strategic recourse. We show that the initially selected candidates determine both the benchmark of success and the direction of improvement, thereby amplifying initial disparities and producing persistent performance gaps across the population.
\end{abstract}

%% file: introduction.tex
\section{Introduction}
\label{sec:intro}

Artificial intelligence (AI) systems are increasingly embedded in institutional decision-making processes. These systems generate predictions, risk scores, and recommendations that influence access to resources, opportunities, and rights. In such settings, AI functions as a decision-support mechanism that shapes institutional outcomes.
Within this context, actionable recourse concerns whether individuals can meaningfully respond to unfavorable decisions produced by AI systems \cite{ustun2019actionable,karimi2020model,karimi2021algorithmic}. Considering a decision support system that maps an individual’s profile to classification results, actionable recourse exists if an individual can make changes to their profile that would lead the system to produce a favorable outcome.
It complements existing principles of AI governance \cite{batool2025ai}, such as explainability, fairness, and accountability, by providing the perspective that a responsible decision system must ensure that outcomes depend, at least in part, on mutable features that individuals can modify with feasible effort. 

However, AI decision-support systems often operate in scenarios involving many individuals simultaneously. In domains such as university admissions, job search, and competitive grants, environments are inherently competitive, meaning that the improvements of some individuals can affect the outcomes of others.
This raises a fundamental question: what happens when actionable recourse is available to everyone in a competitive environment? If all individuals follow the recommended improvement paths, the collective response may alter the distribution of features and shift the decision boundary. This refers to ``endogenous selection'', a mechanism in which both the decision rule and the selection threshold are determined by the population's state. In such settings, improvement becomes relative rather than absolute.

Hence, in this work, we model actionable recourse as a strategic interaction among candidates. The system designer employs a risk-based selection rule. Individuals who receive unfavorable outcomes exert effort with cost to improve actionable features. In a dynamic setting, when many candidates adjust simultaneously, the success benchmark changes. The system evolves toward an equilibrium where further improvement becomes prohibitively costly or ineffective.
This dynamic is closely related to the sociological concept of ``involution'' in competitive systems: as individuals invest increasing effort to improve their profiles, the collective benchmark rises. We also show that individuals who initially receive favorable outcomes define not only the standard of success but also the direction in which others must compete. In this case, competitive recourse can amplify disparities and generate persistent performance gaps across the population.

\noindent\textbf{Contributions:} This work proposes a formal framework for analyzing actionable recourse as a dynamic and strategic process in a competitive environment. The model captures repeated interactions between a designer who selects decision rules and candidates who respond by improving actionable features. We analyze the resulting dynamical system, establish the equilibrium states in which candidates cannot profitably improve through feasible effort, and characterize how competition and resource constraints shape long-run outcomes. 

%% file: relatedwork.tex
\section{Literature Review}

Our work builds on the literature on counterfactual explanations and actionable recourse in classification, which studies whether individuals can modify controllable features to reverse unfavorable decisions \cite{verma2024counterfactual}. While counterfactual explanations identify hypothetical feature changes that would alter a model’s prediction, actionable recourse focuses on whether individuals can realistically implement such changes through feasible and controllable actions.

Ustun et al.~\cite{ustun2019actionable} formalize actionable recourse in linear classification and propose algorithms for computing minimal-cost feature changes under feasibility constraints. Subsequent work extends recourse to model-agnostic and causal settings \cite{karimi2020model,karimi2021algorithmic}, and studies issues such as feasibility, robustness, and the cost of recourse \cite{dominguez2022adversarial,pawelczyk2020counterfactual}. 
The proposed framework is also closely related to strategic classification, which models how agents modify their features in response to a classifier \cite{hardt2016strategic,milli2019social}. These works highlight welfare trade-offs and social costs arising from strategic adaptation. Closely related is the literature on performative prediction, which studies how the deployment of predictive models alters the underlying data-generating process, creating feedback between model predictions and population behavior \cite{perdomo2020performative,drusvyatskiy2023stochastic}. Our work complements these perspectives by integrating recourse with strategic and performative feedback in a competitive and resource-constrained environment.

Finally, the proposed framework connects to the broader literature on competition for limited resources. Resource allocation has been extensively studied in communication networks \cite{kelly1998rate,zhang2011dynamic}, IoT and robotics systems \cite{zhao2024incentive,farooq2021resource,farooq2020qoe}, contest theory \cite{tullock1980efficient}, bargaining \cite{nash1950bargaining}, and market design. We bring this perspective into AI-mediated decision systems and show that competitive recourse dynamics can produce equilibrium stratification and limits on upward mobility when institutional resources are constrained.

%% file: risk.tex
\section{Risk-based Competitive Selection}

In many competitive allocation problems, a decision-maker ranks a population and selects only a fixed fraction, $\rho \in (0,1)$, of top candidates, as in admissions, hiring, and promotion. In these settings, ground-truth labels are typically unavailable; instead, the designer defines a scoring rule whose upper tail determines the selected group.

Consider a population of $n$ candidates represented by feature vectors $X=\{x_i\}_{i=1}^n \subset \mathbb{R}^d$ and assume that $\rho n \in \mathbb{N}$. A linear scoring rule can be defined by assigning each candidate a score $s_i = w^\top x_i$, where $w \in \mathbb{R}^d$. Let $P_n$ denote the empirical distribution assigning mass $1/n$ to each $x_i$, and let $s_{(1)} \ge \dots \ge s_{(n)}$ denote the order statistics of $\{s_i\}_{i=1}^n$. Then, selecting the top $\rho$ fraction corresponds to maximizing the average score of the $\rho n$ values. That is, $\frac{1}{\rho n}\sum_{i=1}^{\rho n} s_{(i)}$.

\subsection{Upper-Tail CVaR Representation}

Conditional Value-at-Risk (CVaR) is a risk measure that evaluates the expected value within the tail of a distribution. While the standard definition $\mathrm{CVaR}_\alpha$  measures the expected loss in the worst $(1-\alpha)\%$ cases, here we use a customized $\mathrm{CVaR}_\rho^{\mathrm{up}}$ to characterize the average score among the highest-performing $\rho$ fraction of the population. Here, we define
\begin{equation}
\label{eq:upper_cvar_formal}
\mathrm{CVaR}_\rho^{\mathrm{up}}(w)=\min_{\eta \in \mathbb{R}}\Big\{\eta+\frac{1}{\rho n}\sum_{i=1}^n(w^\top x_i - \eta)_+\Big\}.
\end{equation} 

\begin{lemma}[CVaR representation]
For every $w \in \mathbb{R}^d$, we have
$\mathrm{CVaR}_\rho^{\mathrm{up}}(w)=\frac{1}{\rho n}\sum_{i=1}^{\rho n} s_{(i)}$.
\label{lemma:CvaR}
\end{lemma}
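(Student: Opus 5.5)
Fix $w$ and define $F(\eta)=\eta+\frac{1}{\rho n}\sum_{i=1}^n (s_i-\eta)_+$. The function $F$ is convex and piecewise linear in $\eta$. Let $k=\rho n\in\mathbb{N}$. The plan has two parts: a lower bound $F(\eta)\ge \frac1k\sum_{i=1}^k s_{(i)}$ valid for every $\eta$, and a specific $\eta^\star$ at which this bound is attained. Together these show that the minimum in \eqref{eq:upper_cvar_formal} exists and equals the top-$k$ average.

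For the lower bound, use $(t)_+\ge t$ and $(t)_+\ge 0$. Drop all terms except those belonging to the $k$ largest scores, which is allowed since each term is nonnegative. This gives
\[
F(\eta)\ \ge\ \eta+\frac1k\sum_{i=1}^k (s_{(i)}-\eta)_+\ \ge\ \eta+\frac1k\sum_{i=1}^k (s_{(i)}-\eta) = \frac1k\sum_{i=1}^k s_{(i)} .
\]
The $\eta$ terms cancel exactly because the retained sum has precisely $k=\rho n$ terms. This cancellation is the reason for the standing assumption $\rho n\in\mathbb{N}$.

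For attainment, take $\eta^\star=s_{(k)}$. For $i\le k$ we have $s_{(i)}\ge \eta^\star$, so $(s_{(i)}-\eta^\star)_+=s_{(i)}-\eta^\star$. For $i>k$ we have $s_{(i)}\le \eta^\star$, so the positive part vanishes. Hence
\[
F(\eta^\star)=\eta^\star+\frac1k\sum_{i=1}^k (s_{(i)}-\eta^\star)=\frac1k\sum_{i=1}^k s_{(i)} .
\]
The infimum is therefore achieved, and the $\min$ in \eqref{eq:upper_cvar_formal} is justified.

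The only delicate point is ties at the threshold, that is, when several $s_i$ equal $s_{(k)}$. The attainment step is unaffected, because tied candidates outside the top $k$ contribute $(s_i-\eta^\star)_+=0$. The lower-bound step uses only an index set of size $k$ and needs no strictness. A careful write-up should state the ordering with a fixed tie-breaking, so that $s_{(1)},\dots,s_{(k)}$ is a well-defined multiset. With that convention the argument goes through.

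As a sanity check, one can also compute the subgradient of $F$: $\partial F(\eta)\ni 1-\frac{1}{k}\#\{i:s_i>\eta\}-\frac{\theta}{k}\#\{i:s_i=\eta\}$ for $\theta\in[0,1]$. This set contains $0$ at $\eta=s_{(k)}$, which confirms that $s_{(k)}$ is a minimizer.
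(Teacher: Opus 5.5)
Your proof is correct, and it takes a different route from the paper's.

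\textbf{The paper's route.} It argues by first-order optimality. Assuming distinct scores, it writes $\partial\phi(\eta)=1-\frac{1}{\rho n}\#\{i: s_i\ge\eta\}$ and imposes $0\in\partial\phi(\eta^\star)$ to get $\#\{i: s_i\ge\eta^\star\}=\rho n$. It then identifies $\eta^\star=s_{(\rho n)}$ and substitutes.

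\textbf{Your route.} You give a sandwich argument. The lower bound holds for every $\eta$: drop the nonnegative terms outside the top $k$, then use $(t)_+\ge t$; the $\eta$'s cancel because exactly $k=\rho n$ terms remain. The bound is matched by attainment at $\eta=s_{(k)}$.

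\textbf{What your route buys.} It needs no distinct-scores assumption. It proves that the minimum is attained rather than presupposing an optimizer. It also never has to pin down a unique minimizer, which matters here: $F$ is constant on $[s_{(k+1)},s_{(k)}]$, since its slope there is $1-k/k=0$. So the paper's count condition only places $\eta^\star$ in $(s_{(k+1)},s_{(k)}]$ and does not force $\eta^\star=s_{(k)}$. This is harmless for the value. In your proof, the equality conditions of your two inequalities recover exactly this minimizer set. The dropped terms vanish iff $\eta\ge s_{(k+1)}$, and the retained positive parts equal $s_{(i)}-\eta$ iff $\eta\le s_{(k)}$.

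\textbf{What the paper's route buys.} It directly exhibits $\eta^\star$ as the $\rho n$-th largest score, the threshold it later uses in the selection rule $w^\top x_i\ge\eta^\star$. Your attainment step yields the same threshold, so nothing is lost.

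\textbf{A small correction.} The values $s_{(1)},\dots,s_{(k)}$ are well defined without any tie-breaking. A convention is needed only to fix which $k$ indices you retain in the lower-bound step.
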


\begin{proof}
Fix $w$ and define $\phi(\eta)=\eta+\frac{1}{\rho n}\sum_{i=1}^n (s_{i}-\eta)_+$, where $s_{i} = w^\top x_{i}$. The function $\phi$ is convex and piecewise linear in $\eta$. Assuming distinct scores for simplicity, then $\partial \phi(\eta)=1-\frac{1}{\rho n}\#\{ i : s_i \ge \eta \}$. An optimizer $\eta^\star$ must satisfy $0 \in \partial \phi(\eta^\star)$, which implies $\#\{ i : s_{i} \ge \eta^\star \}=\rho n$.
Hence, $n-\rho n$ scores lie below $\eta^\star$, so $\eta^\star$ coincides with the $(\rho n)$-th largest score, i.e., $\eta^\star = s_{(\rho n)}$.
Substituting $\eta^\star$ into $\phi$ yields $\phi(\eta^\star)=\frac{1}{\rho n}\sum_{i=1}^{\rho n}s_{(i)}$, which proves the claim.
\end{proof}

This representation shows that competitive selection can be written as the maximization of the customized CVaR of the score distribution. Intuitively, the objective focuses on the expected score among the top-performing $\rho$ fraction of the population, providing a convenient optimization-based representation of top-$\rho$ selection.

\subsection{Regularized Optimization Problem} \label{sec:regularized optimization}

\begin{lemma}
Let $s_i(w)=w^\top x_i$ and let $s_{(1)}(w)\ge \dots \ge s_{(n)}(w)$ denote the order statistics. 
Then, for $t\ge 0$, we have $\mathrm{CVaR}_\rho^{\mathrm{up}}(t w)=t\, \mathrm{CVaR}_\rho^{\mathrm{up}}(w).$
\end{lemma}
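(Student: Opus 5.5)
The plan is to argue directly from the variational definition~\eqref{eq:upper_cvar_formal} by a change of variables in $\eta$, treating $t>0$ and $t=0$ separately. An alternative is Lemma~\ref{lemma:CvaR} together with the fact that scaling by $t\ge 0$ preserves the ordering of scores. I would present the variational argument, since it needs no distinctness of scores, and mention the order-statistic argument as a check.

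For $t>0$, the key identity is positive homogeneity of the positive part, $(tw^\top x_i - \eta)_+ = t\,(w^\top x_i - \eta/t)_+$. Substituting $\eta = t\eta'$, the map $\eta'\mapsto t\eta'$ is a bijection of $\mathbb{R}$, so
\begin{equation*}
\mathrm{CVaR}_\rho^{\mathrm{up}}(tw)=\min_{\eta'\in\mathbb{R}}\Big\{t\eta'+\frac{t}{\rho n}\sum_{i=1}^n (w^\top x_i-\eta')_+\Big\}
= t\,\min_{\eta'\in\mathbb{R}}\Big\{\eta'+\frac{1}{\rho n}\sum_{i=1}^n (w^\top x_i-\eta')_+\Big\},
\end{equation*}
which equals $t\,\mathrm{CVaR}_\rho^{\mathrm{up}}(w)$. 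Pulling $t$ out of the minimum is valid because $t>0$. The minimum is attained, as in the proof of Lemma~\ref{lemma:CvaR}, because $\phi$ is convex, piecewise linear, and coercive: it has slope $1$ for $\eta$ large and slope $1-1/\rho>0$ in magnitude as $\eta\to-\infty$.

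For $t=0$, every score is zero, so $\phi(\eta)=\eta+\frac{1}{\rho}(-\eta)_+$. For $\eta\ge 0$ this equals $\eta\ge 0$. For $\eta<0$ it equals $\eta(1-1/\rho)>0$, since $\rho<1$. Hence the minimum is $0 = 0\cdot\mathrm{CVaR}_\rho^{\mathrm{up}}(w)$. Equivalently, by Lemma~\ref{lemma:CvaR}, the average of the top $\rho n$ zeros is $0$.

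As a cross-check via Lemma~\ref{lemma:CvaR}: for $t\ge0$, multiplying all scores by $t$ preserves their ordering (ties included), so $s_{(i)}(tw)=t\,s_{(i)}(w)$, and averaging the top $\rho n$ gives the claim. The only real obstacle is the $t=0$ case and the justification for factoring $t$ out of the minimum. Both are handled above by the separate $t=0$ argument and the restriction $t>0$ in the change of variables.
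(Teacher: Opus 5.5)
Your proof is correct, but your main argument takes a different route from the paper. The paper's proof is exactly your ``cross-check'': it observes $s_i(tw)=t\,s_i(w)$, notes that scaling by $t\ge 0$ preserves the ordering, and averages the top $\rho n$ order statistics using Lemma~\ref{lemma:CvaR}. Your primary route instead works directly on the variational definition \eqref{eq:upper_cvar_formal}. For $t>0$ you use the homogeneity $(ts-t\eta')_+=t(s-\eta')_+$ and the bijection $\eta=t\eta'$; for $t=0$ you compute directly.

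The paper's version is a one-liner once Lemma~\ref{lemma:CvaR} is available. However, it inherits that lemma's proof, which assumed distinct scores ``for simplicity.'' That assumption fails precisely at $t=0$, where all scores vanish, and whenever ties occur.

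Your version is self-contained and unaffected by ties, and it does not use $\rho n\in\mathbb N$. It is the standard positive-homogeneity argument for CVaR, so it extends verbatim beyond empirical distributions. The price is handling $t=0$ separately and checking that the minimum is finite and attained; you do both correctly.

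One wording nit: as $\eta\to-\infty$ the slope of $\phi$ is $1-1/\rho<0$. Coercivity comes from this negative sign, not just from the slope having nonzero magnitude.
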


\begin{proof}
Since $s_i(tw)=(tw)^\top x_i=t s_i(w)$, scaling $w$ by $t\ge0$ preserves the score ordering. Therefore,
$\mathrm{CVaR}_\rho^{\mathrm{up}}(tw)=\frac{1}{\rho n}\sum_{i=1}^{\rho n} s_{(i)}(tw)=\frac{1}{\rho n}\sum_{i=1}^{\rho n} t s_{(i)}(w)=t\,\mathrm{CVaR}_\rho^{\mathrm{up}}(w)$.
\end{proof}

This shows that $\mathrm{CVaR}_\rho^{\mathrm{up}}$ is positively homogeneous of degree one. If there exists some $w$ such that $\mathrm{CVaR}_\rho^{\mathrm{up}}(w)>0$, then $\sup_{w}\mathrm{CVaR}_\rho^{\mathrm{up}}(w)=+\infty$, since $\mathrm{CVaR}_\rho^{\mathrm{up}}(tw)=t \mathrm{CVaR}_\rho^{\mathrm{up}}(w)$ for all $t\ge 0$. Thus, to make the maximization problem well posed, we need to impose a normalization or regularization on $w$.
Therefore, we consider quadratic regularization with parameter $\lambda>0$:
\begin{equation}
\label{eq:regularized_problem}
\max_{w \in \mathbb{R}^d} \mathrm{CVaR}_\rho^{\mathrm{up}}(w)-\frac{\lambda}{2}\|w\|^2.
\end{equation}
Using \eqref{eq:upper_cvar_formal}, this is equivalent to
\begin{equation}
\label{eq:expanded_problem}
\max_{w}\bigg[\min_{\eta}\big(\eta+\frac{1}{\rho n}\sum_{i=1}^n(w^\top x_i-\eta)_+\big)-\frac{\lambda}{2}\|w\|^2\bigg].
\end{equation}

\subsubsection{Dual Representation of the Inner Problem}

To analyze problem \eqref{eq:expanded_problem}, we introduce slack variables $u_i \ge 0$ satisfying $u_i \ge w^\top x_i-\eta$. Then, for fixed $w$, the inner problem becomes
\begin{equation}
\label{eq:inner_lp}
R(w)=\min_{\eta,u} \eta+\frac{1}{\rho n}\sum_{i=1}^n u_i, \text{s.t. } u_i \ge w^\top x_i-\eta,\; u_i \ge 0.
\end{equation}
As Slater's condition holds for \eqref{eq:inner_lp}, strong duality applies. We then let $\alpha_i \ge 0, \beta_i \ge 0$ be the dual variables associated with the constraints $w^\top x_i-\eta-u_i \le 0$ and $-u_i \le 0$, respectively. Forming the Lagrangian and minimizing over $(\eta,u)$ yields us the dual constraints: $\sum_{i=1}^n \alpha_i = 1,\ 0 \le \alpha_i \le \frac{1}{\rho n}, \forall i$.
Therefore, $R(w)$ can be written in dual as
\begin{equation}
\label{eq:inner_dual}
\max_{\alpha}\Big\{\sum_{i=1}^n \alpha_i w^\top x_i: 0 \le \alpha_i \le \frac{1}{\rho n},\sum_{i=1}^n \alpha_i = 1 \Big\}.
\end{equation}

\subsubsection{Equivalent Max--Max Formulation}

By combining \eqref{eq:regularized_problem} and \eqref{eq:inner_dual}, we can obtain
\begin{equation}
\begin{aligned}
\label{eq:max_max_problem}
\max_{w,\alpha}\ & w^\top \Bigl(\sum_{i=1}^n \alpha_i x_i\Bigr)-\frac{\lambda}{2}\|w\|^2\ \\
\text{s.t. } &0 \le \alpha_i \le \frac{1}{\rho n},\ \sum_{i=1}^n \alpha_i = 1.
\end{aligned}
\end{equation}
For any fixed feasible $\alpha$ that satisfies the constraints in \eqref{eq:max_max_problem}, the objective is strictly concave in $w$, and its unique maximizer is $w^\star(\alpha)=\frac{1}{\lambda}\sum_{i=1}^n \alpha_i x_i$.
Substituting $w^\star(\alpha)$ back yields the reduced problem
\begin{equation}
\label{eq:alpha_reduced_problem}\max_{\alpha}\ \frac{1}{2\lambda}\bigg\|\sum_{i=1}^n \alpha_i x_i \bigg\|^2 \ \text{s.t. } 0 \le \alpha_i \le \frac{1}{\rho n},\ \sum_{i=1}^n \alpha_i = 1.
\end{equation}

\subsubsection{Structure of the Optimal Solution}

By complementary slackness, if $\alpha_i^\star>0$, then $u_i^\star = w^{\star\top}x_i-\eta^\star \ge 0$, so $w^{\star\top}x_i \ge \eta^\star$.
Therefore, we can say that points with positive dual weight lie in the upper tail determined by the threshold $\eta^\star$.
The feasible set for $\alpha$ is a capped simplex. Since each coordinate is bounded above by $1/(\rho n)$, at least $\rho n$ coordinates must be positive to satisfy the constraint. Extreme points therefore have exactly $\rho n$ coordinates equal to $1/(\rho n)$ and the rest zero. Hence, for a nondegenerate dataset, $\alpha_i^\star \in \{0,1/(\rho n)\}$, and $w^\star=\frac{1}{\lambda\rho n}\sum_{i \in \mathcal I} x_i$,
where $\mathcal I$ denotes the subset of cardinality $\rho n$ corresponding to the selected points under the optimal score function.
Thus, competitive selection can be interpreted as choosing a subset of mass $\rho n$ whose empirical centroid has the largest norm after quadratic regularization. The optimal scoring vector is proportional to the centroid of this selected upper-tail subset, while the dual variables act as endogenous selection weights on a capped simplex.

%% file: actionable.tex
\section{Actionable Recourse Under Risk-Based Selection}

The competitive selection rule derived above determines an optimal scoring direction $w^\star$ and threshold $\eta^\star$. A candidate with $x_i$ is selected if and only if $w^{\star\top} x_i \ge \eta^\star$.
Because the rule is linear, explicit, and geometrically interpretable, it induces a well-defined separating hyperplane. This structure naturally gives rise to the notion of \emph{actionable recourse}: for any rejected candidate satisfying $w^{\star\top} x_i < \eta^\star$, one may ask what minimal feasible modification of $x_i$ would move the candidate across the competitive boundary.

\subsection{Actionable and Immutable Features}

We partition the feature indices $\{1,\dots,d\} = J_A \cup J_N$, where $J_A$ denotes \emph{actionable} coordinates and $J_N$ denotes \emph{immutable} coordinates. An admissible action $a_i \in \mathbb{R}^d$ must lie in the subspace $\mathcal{A} := \{a \in \mathbb{R}^d : a_j = 0 \text{ for } j \in J_N\}$. In addition, we also let $w_A^\star$ be the projection of the optimal weights onto $J_A$ for subsequent analysis.

\subsection{Minimal Recourse and Competitive Margin}\label{sec:minimal_recourse}

Suppose a candidate with $x_i$ is not selected ($w^{\star\top} x_i < \eta^\star$), we define the \emph{competitive margin} as $\Delta_i := \eta^\star - w^{\star\top} x_i$, which is greater than zero. If the candidate applies an action $a_i \in \mathcal A$, their new score becomes $w^{\star\top}(x_i + a_i)=w^{\star\top}x_i + w_A^{\star\top} a_i$. Consequently, the condition for successful recourse is: $w_A^{\star\top} a_i \ge \Delta_i$. 
Then, to measure the burden of change, we assume a quadratic cost function, reflecting the intuition that effort becomes disproportionately more difficult as the scale of change increases. The minimal recourse problem is:
\begin{equation}
\label{eq:recourse_problem}
\min_{a_i \in \mathcal{A}} \frac{1}{2}\|a_i\|^2 \quad \text{s.t.} \quad w_A^{\star\top} a_i \ge \Delta_i.
\end{equation}

\begin{proposition}[Closed-Form Recourse]
If $w_A^\star \neq 0$, the unique optimal action and cost for \eqref{eq:recourse_problem} are:
\[
a_i^\star = \frac{\Delta_i}{\|w_A^\star\|^2} w_A^\star, \qquad c_i^\star = \frac{\Delta_i^2}{2\|w_A^\star\|^2}.
\]
\end{proposition}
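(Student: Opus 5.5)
The problem \eqref{eq:recourse_problem} asks for the point of the closed halfspace $H_i=\{a\in\mathcal A: w_A^{\star\top}a\ge \Delta_i\}$ of smallest norm. We will treat it as a minimum-norm projection of the origin onto $H_i$ and certify the candidate solution through the KKT conditions. The first step is to reduce to the actionable subspace. Since $a\in\mathcal A$ means $a_j=0$ for $j\in J_N$, and $w_A^\star$ is supported on $J_A$, we can identify $\mathcal A$ with $\mathbb R^{|J_A|}$. This identification preserves both the objective $\tfrac12\|a\|^2$ and the linear constraint. In addition, $w_A^\star\in\mathcal A$, so the candidate $a_i^\star$ is admissible.

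The second step establishes existence and uniqueness. The objective is strictly convex, since it is $\tfrac12\|a\|^2$ on a linear subspace. The feasible set is a closed convex halfspace, and it is nonempty because $w_A^\star\neq 0$: the point $t w_A^\star$ with $t=\Delta_i/\|w_A^\star\|^2$ is feasible. The objective is also coercive. Together these give a unique minimizer.

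The third step identifies that minimizer. Because $\Delta_i>0$, the origin is infeasible, so the constraint must be active at the optimum. The Lagrangian is $\tfrac12\|a\|^2-\mu(w_A^{\star\top}a-\Delta_i)$ with $\mu\ge0$. Stationarity on $\mathcal A$ gives $a=\mu w_A^\star$. Imposing the active constraint $w_A^{\star\top}a=\Delta_i$ yields $\mu=\Delta_i/\|w_A^\star\|^2\ge 0$. Dual feasibility and complementary slackness then hold. Since the problem is convex and Slater's condition holds (pick $a=2tw_A^\star$), the KKT conditions are sufficient, so $a_i^\star=\frac{\Delta_i}{\|w_A^\star\|^2}w_A^\star$. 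Substituting back gives the cost $c_i^\star=\tfrac12\mu^2\|w_A^\star\|^2=\frac{\Delta_i^2}{2\|w_A^\star\|^2}$.

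As an alternative check that avoids duality, Cauchy–Schwarz gives $\Delta_i\le w_A^{\star\top}a\le\|w_A^\star\|\|a\|$ for every feasible $a$. Hence $\|a\|\ge \Delta_i/\|w_A^\star\|$, and equality forces $a$ to be a nonnegative multiple of $w_A^\star$ with the constraint tight. This yields uniqueness directly.

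The only delicate point is the first step: stationarity must be taken within the subspace $\mathcal A$, not in $\mathbb R^d$. This amounts to projecting the gradient onto $\mathcal A$. Because $w_A^\star$ already lies in $\mathcal A$, that projection changes nothing.
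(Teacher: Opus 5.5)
Your proposal is correct and follows the paper's route: form the Lagrangian, use stationarity to get $a=\mu w_A^\star$, argue that the constraint binds because $\Delta_i>0$, and solve for the multiplier. Your additions make rigorous the uniqueness claim that the paper's proof leaves implicit: existence and uniqueness from strict convexity and coercivity, stationarity taken within $\mathcal A$, and the Cauchy--Schwarz check. One small correction is that KKT sufficiency for a convex problem does not need Slater's condition, since Slater is what gives necessity.
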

\begin{proof}
    The Lagrangian is $\mathcal{L}(a_i,\lambda)=\frac12\|a_i\|^2+\lambda(\Delta_i - w_A^{\star\top} a_i)$ with multiplier $\lambda \ge 0$. In order to find the minimum, we take the derivative with respect to the action $a_i$ and set it to zero, leading to $a_i = \lambda w_A^\star$. Then, since $\Delta_i>0$, the inequality constraint must bind at optimality, leading to $w_A^{\star\top} a_i = \lambda \|w_A^\star\|^2 = \Delta_i$, $\lambda=\Delta_i/\|w_A^\star\|^2$. Substituting $\lambda$ back into the expression for $a_i$ concludes the proof.
\end{proof}
Note that The cost $c_i^\star = \frac{\Delta_i^2}{2\|w_A^\star\|^2}$ grows quadratically in the margin, which indicates that candidates farther from the threshold face disproportionately larger effort. Moreover, the denominator $\|w_A^\star\|^2$ can be interpret as the measure of the \emph{actionability strength}. If actionable weight is rather small, recourse becomes expensive even for small margins.

\subsection{Structural Accountability}

This framework reveals the following \emph{structural accountability failure}:
\begin{itemize}
    \item \textbf{Infeasibility:} If $w_A^\star = 0$ and $\Delta_i > 0$, the decision boundary becomes insensitive to any admissible individual effort and recourse becomes impossible. The rejection result is final in a structural sense rather than a performance-based sense. This denies contestability.
    \item \textbf{Actionability Strength:} The term $\|w_A^\star\|^2$ diagnosticates the system's permeability. Low actionable weight signals a ``practically immutable'' system that performs rigid sorting rather than evaluative selection.
    \item \textbf{Incentive Alignment:} Since $a_i^\star \propto w_A^\star$, the classifier explicitly directs and therefore incentivizes effort. Legitimacy may require that $w_A^\star$ aligns with socially or economically productive dimensions.
\end{itemize}

%% file: dynamic.tex
\section{Dynamic Recourse with Endogenous Direction}


We consider discrete time and indexed by $t \in \mathbb{N} := \{0,1,2,\dots\}$. At each time $t$, the state of the system is the population feature profile $X^t = \{x_i^t\}_{i=1}^n \subset \mathbb{R}^d$, with associated empirical distribution $P_n^t$ supported by $X^t$. Thus, the population configuration constitutes the state variable. The evolution of the system is driven by the interaction between the designer's selection rule and the candidates' responses.

\subsection{Designer’s CVaR-Based Selection Problem}\label{sec:designer}

At time $t$, the designer observes $X^t$ and solves the risk-based selection problem described in \ref{sec:regularized optimization}. This produces an optimal scoring direction $w^{\star,t}$ and threshold $\eta^{\star,t}$ determined endogenously by the current feature distribution. The selection results are then announced: candidates satisfying $w^{\star,t\top} x_i^t \ge \eta^{\star,t}$ are accepted, while others are rejected. Let 
For each rejected candidate, the system computes the minimal actionable recourse direction derived from $w^{\star,t}$. Since the scoring rule is linear, the improvement direction coincides with the actionable projection $d^t := \Pi_{J_A}(w^{\star,t})$ of the currently most competitive profile. This direction is endogenous: it is not externally specified, but is induced by the geometry of the CVaR-based selection. Each rejected candidate is then informed of the direction in the feature space along which the improvement efficiently increases their score. 

\subsection{Candidates' Optimization Problem: Minimal Recourse}
 
The candidates then choose the effort along the direction $d^t$. In the simplest deterministic formulation, each rejected candidate updates their feature vector according to the minimal recourse prescription in \ref{sec:minimal_recourse} (or another rule depending only on $d^t$ and their current margin), while the accepted candidates can remain unchanged or follow an exogenous adjustment rule. The resulting adjustments produce the next feature profile $X^{t+1} = \{x_i^{t+1}\}_{i=1}^n$.
This interaction defines a deterministic closed-loop dynamical system $X^{t+1} = \Phi(X^t)$,
where the mapping $\Phi$ captures both the designer’s optimization step and the candidates’ best-response recourse step. The classifier and the population co-evolve: the selection rule depends on the current feature distribution, while the distribution evolves in response to that rule. 

\subsection{Candidates' Problem with Logarithmic Barrier Effort Cost}\label{sec:log_barrier}

Consider candidate $i$ with feature vector $x_i^t \in \mathcal X \subset \mathbb R^d$, where $\mathcal X$ is nonempty, compact, and convex. Let $e_A$ Assume there exists a distinguished actionable coordinate $g_i^t := e_A^\top x_i^t$ that candidate $i$ is allowed to modify. This coordinate is subject to a physical or institutional ceiling, such that $g_i^t \le \overline g$ (e.g., a maximum score of $100$ on an examination). We define the ``improvement gap'' as: as $\Delta_i^t := \overline g - g_i^t > 0$. The quantity $\Delta_i^t$ measures the distance of the actionable feature from its maximal attainable value, representing the remaining capacity for upward mobility in that dimension.

\subsubsection{Admissible Actions}
At time $t$, the system communicates an actionable direction $d^t \in \mathbb R^d$. Since recourse operates through modifiable coordinates, we can assume $e_A^\top d^t > 0$, then the normalized actionable direction is defined as $\Tilde{d}^t=d^t/e_A^\top d^t$. Admissible actions are restricted to movements along direction $\Tilde{d}^t$, so that $a_i^t = \gamma_i^t \Tilde{d}^t$ with $\gamma_i^t \ge 0$ for candidate $i$. Moreover, feasibility requires that the improved feature vector remains within the domain $x_i^t + \gamma_i^t \Tilde{d}^t \in \mathcal X$, which implies the effort is bounded by the remaining improvement gap: $\gamma_i^t < \Delta_i^t$. The compactness of $\mathcal X$ ensures that the admissible set for $\gamma_i^t$ is bounded.

\subsubsection{Logarithmic Barrier Effort Cost}
To model increasing difficulty near the ceiling $\overline g$, we encode the boundary constraint through a logarithmic barrier. Let $k_i>0$ and $\theta_i>0$ be fixed parameters. The effort cost is defined as 
\begin{equation}
    C_i^t(\gamma)=\frac{k_i}{2}\gamma^2-\theta_i \log(\Delta_i^t - \gamma),\quad 0 \le \gamma < \Delta_i^t.
\label{eq:effort_cost}
\end{equation}
The quadratic term captures standard convex effort cost, while the logarithmic barrier enforces the domain constraint smoothly. The function $C_i^t$ is strictly convex on $(0,\Delta_i^t)$ since $\partial^2 C_i^t/\partial \gamma^2 = k_i + [\theta_i/(\Delta_i^t - \gamma)^2] > 0$. Moreover, as $\gamma \to \Delta_i^t$, the marginal cost $\partial C_i^t/\partial \gamma = k_i \gamma + [\theta_i/(\Delta_i^t - \gamma)]$ diverges to $+\infty$. Thus improvement becomes infinitely costly as the actionable feature approaches its ceiling.

\subsubsection{Candidate Optimization Problem}\label{sec:effort_optimization}
For a rejected candidate $i \in \mathcal R^t$, the objective is to maximize perceived benefit from score improvement net of effort cost: $$\max_{0 \le \gamma_i^t < \Delta_i^t}\;\beta_i w^{t\top}(x_i^t + \gamma_i^t \Tilde{d}^t)-\frac{k_i}{2}(\gamma_i^t)^2+\theta_i \log(\Delta_i^t - \gamma_i^t),$$ where $\beta_i>0$ is the valuation of score improvement. Using linearity, $w^{t\top}(x_i^t + \gamma_i^t \Tilde{d}^t) = w^{t\top} x_i^t + \gamma_i^t w^{t\top} \Tilde{d}^t$. Discarding constants yields the reduced problem $$\max_{0 \le \gamma_i^t < \Delta_i^t}\;\beta_i \gamma_i^t w^{t\top} \Tilde{d}^t-\frac{k_i}{2}(\gamma_i^t)^2+\theta_i \log(\Delta_i^t - \gamma_i^t).$$ The objective is strictly concave in $\gamma_i^t$; therefore, there exists a unique maximizer.

Let $S_i^t := \beta_i w^{t\top} \Tilde{d}^t$. The first-order condition for an interior solution is $S_i^t=k_i \gamma_i^t+[\theta_i/(\Delta_i^t - \gamma_i^t)]$.
Multiplying both sides by $(\Delta_i^t - \gamma_i^t)$ yields the quadratic equation $k_i (\gamma_i^t)^2-(k_i \Delta_i^t + S_i^t)\gamma_i^t-(\theta_i - S_i^t \Delta_i^t)=0$.
The admissible root satisfying $0 \le \gamma_i^t < \Delta_i^t$ is
$$\gamma_i^{t\star}=\frac{(k_i \Delta_i^t + S_i^t)-\sqrt{(k_i \Delta_i^t + S_i^t)^2 + 4k_i(\theta_i - S_i^t \Delta_i^t)}
}{2k_i}.$$
Recall that $\mathcal I^t$ represents the set of $\rho n$ selected candidates, and let $\mathcal R^t$ denote the set of rejected candidates. The update rule for $i \in \mathcal R^t$ is therefore $x_i^{t+1} = x_i^t + \gamma_i^{t\star} \Tilde{d}^t$. As $g_i^t \to \overline g$, we have $\Delta_i^t \to 0$, and the barrier term forces $\gamma_i^{t\star} \to 0$. Improvement thus slows smoothly and endogenously near the boundary. 

Combining this update with $w^\star$  in Section \ref{sec:regularized optimization} indexed by $t$ yields the closed-loop recursion
\begin{equation}
\label{eq:closed_loop_expanded}
x_i^{t+1} = x_i^t +
\mathbf{1}_{\{ i \in \mathcal R^t \}}
\gamma_i^{t\star}\,\Pi_{J_A}\big(\frac{1}{\lambda \rho n}\sum_{j \in \mathcal I^t} x_j^t\big).
\end{equation}
Equation \eqref{eq:closed_loop_expanded} shows that the evolution of every candidate's feature vector depends on two endogenous objects: the centroid of the current top-$\rho$ tail and the optimally chosen effort $\gamma_i^{t\star}$, which depends on the actionable signal $d^t$ and the remaining gap $\Delta_i^t$. The dynamics are therefore driven by the actionable projection of the tail centroid together with heterogeneous effort responses.

The induced mapping $\Phi : X^t \mapsto X^{t+1}$ is deterministic but piecewise smooth. Nonsmoothness arises from the dependence of $\mathcal I^t$ on the ranking induced by $w^t$. Small perturbations of $X^t$ that change the $\rho n$-th ranked candidate can alter $\mathcal I^t$ discontinuously, changing the centroid and direction $d^t$.
The evolution follows $X^t \to \mathcal I^t \to w^t \to d^t \to \gamma^{t\star} \to X^{t+1}$.


\subsection{Recourse Equilibrium}\label{sec:recourse-equilibrium}

We define the closed-loop mapping $\Phi$ induced by one round of the designer's optimization, optimal recourse, and characterize its fixed points. To avoid notational ambiguity, we explicitly introduce solution operators; the $w,\eta,\gamma$ denote optimizer \emph{points} determined by $X$ through these operators.

We define the
\emph{designer solution operator} $\mathcal D:(\mathbb R^d)^n\to \mathbb R^d\times\mathbb R, \mathcal D(X) := (w_X,\eta_X)$, where $(w_X,\eta_X)$ is the optimizer of \eqref{eq:expanded_problem}.
Given $(w_X,\eta_X)$, the selected and rejected sets are defined as $\mathcal I_X := \{i: w_X^\top x_i \ge \eta_X\}, \mathcal R_X := \{1,\dots,n\}\setminus \mathcal I_X$,
The communicated actionable direction is $d_X := \Pi_{J_A}(w_X)$.
Then, for each $i\in\mathcal R_X$, define the remaining actionable gap
$\Delta_i(X):=\overline g - e_A^\top x_i$ and consider the strictly concave problem in Section \ref{sec:effort_optimization}.
Strict concavity implies a unique maximizer; denote it by $\gamma_{i,X}$.
For $i\in\mathcal I_X$ we set $\gamma_{i,X}:=0$. Then, the \emph{effort response operator} can be defined as $\Gamma_i:(\mathbb R^d)^n\to\mathbb R_+, \Gamma_i(X):=\gamma_{i,X}$.
Finally, define $\Phi:(\mathbb R^d)^n\to(\mathbb R^d)^n$ componentwise by
\begin{equation}
\label{eq:Phi_update_op}
(\Phi(X))_i = x_i
+
\mathbf 1_{\{i\in\mathcal R_X\}} \gamma_{i,X}\, \Tilde d_X,
\quad i=1,\dots,n,
\end{equation}
which yields the closed-loop recursion $X^{t+1}=\Phi(X^t)$.

\begin{definition}[Recourse equilibrium]
A state $X^\star\in(\mathbb R^d)^n$ is a \emph{recourse equilibrium} if
$\Phi(X^\star)=X^\star$.
\label{def:eqm}
\end{definition}

For a recourse equilibrium $X^\star$, write
$(w^\star,\eta^\star):=\mathcal D(X^\star)$, $d^\star:=\Pi_{J_A}(w^\star)$,
$\mathcal I^\star:=\mathcal I_{X^\star}$, and $\mathcal R^\star:=\mathcal R_{X^\star}$.

\begin{proposition}[Fixed-point characterization]
\label{prop:recourse_eq_characterization}
A state $X^\star$ is a recourse equilibrium if and only if $\forall i\in\mathcal R^\star, \gamma_{i,X^\star}\, d^\star = 0$.
\end{proposition}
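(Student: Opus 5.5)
The plan is to unfold Definition~\ref{def:eqm} componentwise using the explicit form \eqref{eq:Phi_update_op} of $\Phi$, and to relate the normalized direction $\Tilde d_{X^\star}$ to $d^\star=\Pi_{J_A}(w^\star)$. By \eqref{eq:Phi_update_op}, $\Phi(X^\star)=X^\star$ holds if and only if
\[
\mathbf 1_{\{i\in\mathcal R^\star\}}\,\gamma_{i,X^\star}\,\Tilde d_{X^\star}=0\quad\text{for all } i=1,\dots,n.
\]
For $i\in\mathcal I^\star$ the indicator vanishes, and in any case $\gamma_{i,X^\star}:=0$ by construction, so these components are automatically fixed. The fixed-point condition therefore reduces to $\gamma_{i,X^\star}\Tilde d_{X^\star}=0$ for every $i\in\mathcal R^\star$.

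It remains to replace $\Tilde d_{X^\star}$ by $d^\star$. In the standing setting of Section~\ref{sec:log_barrier}, $e_A^\top d^\star>0$ and $\Tilde d_{X^\star}=d^\star/(e_A^\top d^\star)$. This is a nonzero scalar multiple of $d^\star$, so $\gamma_{i,X^\star}\Tilde d_{X^\star}=0$ if and only if $\gamma_{i,X^\star}d^\star=0$. Both implications follow by multiplying or dividing by the positive scalar $e_A^\top d^\star$.

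The only delicate point is the degenerate case $e_A^\top d^\star=0$, for instance $d^\star=0$ as in the ``infeasibility'' discussion, where $\Tilde d$ is undefined. I would handle it by the convention that no admissible improvement direction is communicated, so that $\Tilde d_{X^\star}:=0$ and the update is null. Under this convention both sides of the equivalence hold trivially whenever $d^\star=0$.

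If instead $d^\star\neq0$ but $e_A^\top d^\star=0$, the normalization is not meaningful, and I would state that the proposition presumes the standing assumption $e_A^\top d^\star>0$. Making this convention explicit is the main, though minor, obstacle. The rest is pure bookkeeping, with no appeal to the optimization structure beyond the definition of $\gamma_{i,X}$ on $\mathcal I_X$.
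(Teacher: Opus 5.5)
Your proof is correct and follows the paper's argument: you unfold $\Phi(X^\star)=X^\star$ componentwise via \eqref{eq:Phi_update_op}, note that components in $\mathcal I^\star$ are trivially fixed, and reduce to $\gamma_{i,X^\star}\Tilde d^\star=0$ on $\mathcal R^\star$. Your explicit use of $\Tilde d^\star=d^\star/(e_A^\top d^\star)$ with $e_A^\top d^\star>0$, and the convention $\Tilde d:=0$ when $d^\star=0$, fill in the passage from $\Tilde d^\star$ to $d^\star$ that the paper covers with a bare ``which implies'' (giving only one direction), and the convention is what the structural-equilibrium case discussed right afterward actually needs.
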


\begin{proof}
$\Phi(X^\star)=X^\star$ if and only if $(\Phi(X^\star))_i=x_i^\star$ for each $i\in\{1,\dots,n\}$.
Using \eqref{eq:Phi_update_op} with $X=X^\star$ gives $(\Phi(X^\star))_i = x_i^\star + \mathbf 1_{\{i\in\mathcal R^\star\}}\gamma_{i,X^\star}\, \Tilde d^\star$.
Thus. $(\Phi(X^\star))_i=x_i^\star$ holds for all $i$ if and only if $\mathbf 1_{\{i\in\mathcal R^\star\}}\gamma_{i,X^\star}\, \Tilde d^\star=0$ for all $i$, which implies  $\gamma_{i,X^\star}\, d^\star=0$. for $i\in\mathcal R^\star$.
\end{proof}

The fixed-point condition admits two regimes. A \emph{structural equilibrium} occurs when $d^\star=0$, i.e., the learned scoring direction lies entirely in the immutable subspace. An \emph{effort-suppressed equilibrium} occurs when $d^\star\neq 0$ but $\gamma_{i,X^\star}=0$ for all $i\in\mathcal R^\star$, meaning an actionable direction exists but all rejected candidates optimally exert zero effort due to an unfavorable marginal tradeoff (e.g., a strong barrier effect).

\subsection{Emergence of Social Stratification}


\begin{definition}[Social Stratification]
A recourse equilibrium $X^\star$ exhibits \emph{social stratification} if $d^\star=0$ and $\mathcal R^\star\neq\varnothing$.
\end{definition}
Stratification occurs when selection depends only on
immutable coordinates, while some candidates remain rejected.



Let $\bar x_I^t=\frac{1}{|\mathcal I^t|}\sum_{i\in\mathcal I^t}x_i^t$ and $\bar x_R^t=\frac{1}{|\mathcal R^t|}\sum_{i\in\mathcal R^t}x_i^t$ denote the group centroids, and define the intergroup gap $D^t:=\|\bar x_I^t-\bar x_R^t\|$. Although individual members of $\mathcal I^t$ may not move, the centroid $\bar x_I^t$ may evolve over time, i.e., changes in membership of $\mathcal I^t$. Rejected candidates move according to $x_i^{t+1}=x_i^t+\gamma_i^{t\star} \Tilde d^t$ for $i\in\mathcal R^t$.
Then, stratification widens whenever the effective motion of the tail centroid exceeds the improvement of the rejected group, i.e., $\|\bar x_I^{t+1}-\bar x_I^t\| > \|\bar x_R^{t+1}-\bar x_R^t\|$, which implies $D^{t+1}>D^t$. Since both $w^t$ and $d^t$ are determined by $\mathcal I^t$, the top group defines not only the standard of success but also the direction in which others must compete, creating an asymmetric feedback mechanism.

At a stratified equilibrium $X^\star$ with $d^\star=0$, the inter-group gap $\|\bar x_I^\star-\bar x_R^\star\|$ is positive whenever $\mathcal R^\star\neq\varnothing$. Since selection depends only on immutable coordinates, the persistent separation is supported entirely on the immutable subspace: $D^\star=\|\Pi_{J_N}(\bar x_I^\star-\bar x_R^\star)\|$.
The actionable component of the tail centroid vanishes, but an immutable gap remains. Stratification, therefore, corresponds to a fixed partition with a strictly positive immutable distance between groups.

%% file: experiment.tex
\section{Numerical Case Study}
\begin{figure*}\vspace{+2mm}
    \centering
    \includegraphics[width=0.94\textwidth]{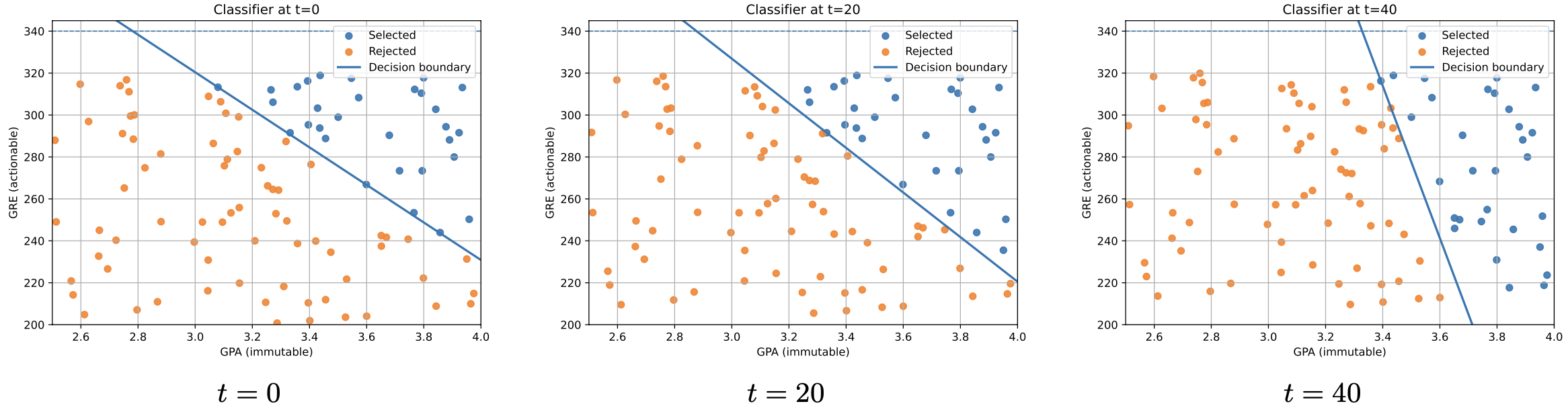}
    \caption{Evolution of the classifier and population over time.}
\label{fig:classifier_evolution}
\end{figure*}

\begin{figure*}\vspace{-3mm}
    \centering
    \includegraphics[width=0.95\textwidth]{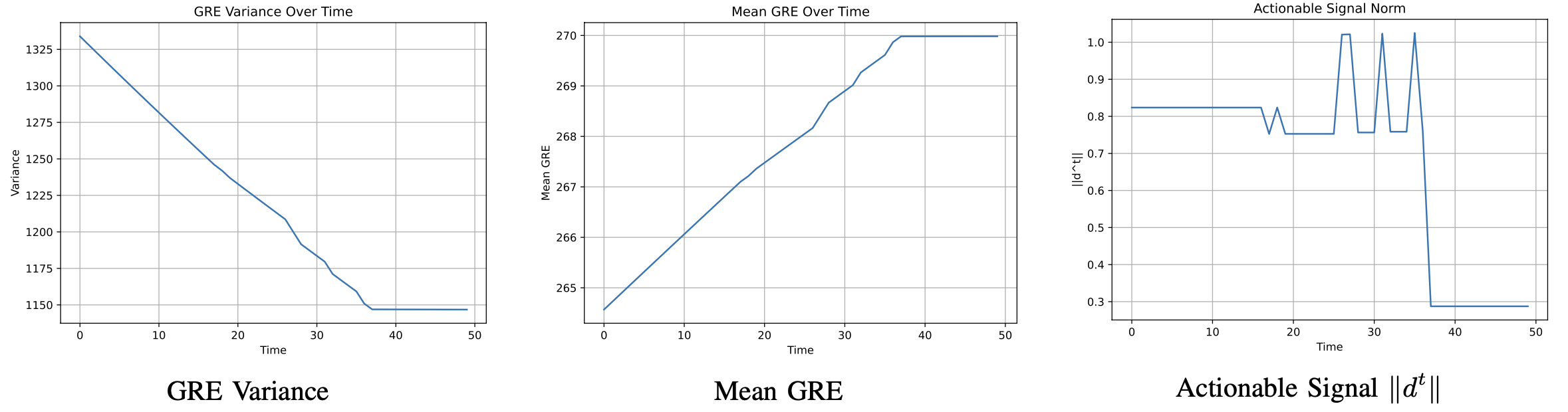}
    \caption{Aggregate evolution of GRE variance, mean GRE, and actionable signal norm.}
\label{fig:aggregate_dynamics}\vspace{-3mm}
\end{figure*}

We illustrate the closed-loop evolution of the CVaR-based classifier in \ref{sec:designer} when candidates update their actionable feature (GRE) according to the logarithmic barrier effort rule derived earlier in \ref{sec:log_barrier}. The actionable feature is bounded above by $\overline{g} = 340$, and the marginal cost of improvement increases as this upper bound approaches.

\subsection{Case Setup}
At each time step, the designer recomputes the scoring vector $w^t$ using the empirical top-$\rho$ tail of the population, which induces an updated decision boundary. Candidates who receive unfavorable decisions respond optimally by adjusting their actionable feature along the direction implied by the current classifier, subject to their effort constraints.
The resulting dynamics aim to capture the interaction among the geometry of the selection rule, candidate’ strategic recourse behavior, and the increasing marginal effort cost generated by the upper bound on the actionable feature.

\subsection{Evolution of the Decision Boundary}

Figure \ref{fig:classifier_evolution} shows the evolution of the decision boundary. At $t=0$, the separating hyperplane is oblique, indicating the joint contribution of GPA (immutable) and GRE (actionable) to the CVaR centroid. In this initial case, rejected candidates below the boundary have a clear incentive to increase their GRE scores. Consequently, from $t=0$ to $t=20$, the rejected candidates move upward along the GRE dimension. 

However, as these candidates improve, the top-$\rho$ tail changes, which in turn causes the classifier to rotate.
This demonstrates the endogenous co-evolution of the population's feature distribution and the decision rule. From $t=20$ to $t=40$, the improvement in GRE scores becomes slow. This occurs primarily because candidates approach the ceiling $\overline g$, and the logarithmic barrier increases the marginal cost of effort. 
By $t=40$, the decision boundary has become steeper, reflecting the diminishing incremental returns of further GRE improvements. 
In general, the system transitions from a phase of rapid adjustment to one of gradual stabilization. 

\subsection{Aggregated Dynamics}

Figure~\ref{fig:aggregate_dynamics} summarizes the aggregated evolution of the system. 
The GRE variance panel shows that dispersion initially persists, showing heterogeneous responses across candidates to the actionable signal. As candidates approach the ceiling, the variance decreases and stabilizes.
The mean GRE panel displays steady upward movement during the early periods, followed by gradual flattening. This flattening is consistent with the endogenous slowdown implied by the first-order condition in Section \ref{sec:log_barrier}. As $g_t \to \overline g$, the barrier term becomes dominant and the optimal effort level $\alpha$ converges toward zero.
The actionable signal panel tracks the norm $\|d^t\|$.  Initially, the signal may remain substantial, thus driving the improvement in GRE scores. Over time, as the population's feature distribution shifts, the geometry of the top-$\rho$ centroid adjusts, causing $\|d^t\|$ to fluctuate before eventually declining.
Convergence may therefore arise from two complementary forces: attenuation of the signal itself or suppression of effort induced by the boundary constraint.

%% file: conclusion.tex
\section{Conclusion and Future Work}

This paper studies actionable recourse in competitive decision environments. While existing work typically analyzes recourse for a single individual responding to a fixed decision rule, many real-world systems select only a limited fraction of applicants. We develop a dynamic framework in which individuals strategically modify actionable features in response to a risk-based selection rule. Since many agents adjust simultaneously, the benchmark for success evolves with the population feature distribution, leading to an endogenous selection mechanism. The proposed model provides an interpretation of involution: as individuals exert increasing effort to improve their profiles, the collective benchmark rises, making further improvement less effective.
Our analysis also shows that individuals who initially receive favorable outcomes determine both the standard of success and the direction in which others must compete. This can amplify initial disparities and generate persistent performance gaps across the population. Future directions include further analysis of the dynamical system and the connection to dynamic games in feedback form.